    \newcommand{\fnorm}[1]{\left|\left|\left| #1 \right| \right| \right|}
    \newcommand{\tr}{\mathrm{trace}}
    \renewcommand{\vec}{\mathbf}
    \newtheorem{theorem}{Theorem}
    \newtheorem{proposition}{Proposition}
\def\BibTeX{{\rm B\kern-.05em{\sc i\kern-.025em b}\kern-.08em
    T\kern-.1667em\lower.7ex\hbox{E}\kern-.125emX}}
\begin{document}

\title{The Maximal Eigengap Estimator for Acoustic Vector-Sensor Processing \\
\thanks{All authors acknowledge support from the Office of Naval Research, Grants No. N0001420WX01523, N0001421WX01634, and N0001421WX00410. We are also grateful to the Monterey Bay Aquarium Research Institute (MBARI) for their role in collecting the data used in this paper.}
}

\author{\IEEEauthorblockN{Robert Bassett}
\IEEEauthorblockA{\textit{Operations Research Department} \\
\textit{Naval Postgraduate School}\\
Monterey, CA 93943, USA\\
robert.bassett@nps.edu}\\
\IEEEauthorblockN{Paul Leary}
\IEEEauthorblockA{\textit{Physics Department} \\
\textit{Naval Postgraduate School}\\
Monterey, CA 93943, USA\\
pleary@nps.edu}
\and
\IEEEauthorblockN{Jacob Foster}
\IEEEauthorblockA{\textit{Operations Research Department} \\
\textit{Naval Postgraduate School}\\
Monterey, CA 93943, USA\\
jacob.foster@nps.edu}
\and
\IEEEauthorblockN{Kay L. Gemba}
\IEEEauthorblockA{\textit{Acoustics Division, Code 7160} \\
\textit{U.S. Naval Research Laboratory}\\
Washington, D.C. 20375, USA \\
kay.gemba@nrl.navy.mil} \\
\IEEEauthorblockN{Kevin B. Smith}
\IEEEauthorblockA{\textit{Physics Department} \\
\textit{Naval Postgraduate School}\\
Monterey, CA 93943, USA\\
kbsmith@nps.edu}
}

\maketitle

\begin{abstract}
This paper introduces the maximal eigengap estimator for finding the direction of arrival of a wideband acoustic signal using a single vector-sensor. We show that in this setting narrowband cross-spectral density matrices can be combined in an optimal weighting that approximately maximizes signal-to-noise ratio across a wide frequency band. The signal subspace resulting from this optimal combination of narrowband power matrices defines the maximal eigengap estimator. We discuss the advantages of the maximal eigengap estimator over competing methods, and demonstrate its utility in a real-data application using signals collected in 2019 from an acoustic vector-sensor deployed in the Monterey Bay.

\end{abstract}

\begin{IEEEkeywords}
direction of arrival, acoustic vector-sensor, signal subspace, eigengap
\end{IEEEkeywords}

\section{Introduction}
Direction of Arrival (DOA) estimation of acoustic signals is a problem which spans multiple application areas. Examples include biology, where tracking marine mammals can provide detailed information on their habits, and defense, where monitoring and locating vessels has utility in many naval applications. One advantage of an acoustic vector-sensor over a conventional hydrophone array is that acoustic vector-sensors have a smaller footprint, while still providing signal direction. The specialized nature of data collected by an acoustic vector-sensor motivates signal processing techniques customized for its analysis.

In this paper, we consider DOA estimation of a single wideband source using a single acoustic vector-sensor. We introduce the \emph{maximal eigengap estimator}, which combines cross-spectral density (CSD) matrices across a wide frequency range to maximize signal-to-noise ratio (SNR) of the resulting signal subspace. In the setting of a single acoustic vector sensor and a single source, we provide a tractable formulation of the maximal eigengap estimator, thus providing a new DOA estimation method customized for this setting.

Previous work on DOA estimation of wideband signals aggregates narrowband information across a wide frequency band in a variety of ways. In \cite{CSSM}, CSD matrices for each frequency are combined via linear combination, where the ideal weighting of each matrix is given by the SNR in that frequency. 
Because this SNR is unknown, the authors take an equal weighting of each cross-spectral matrix. The authors of \cite{WSF} instead aggregate narrowband frequency information by forming an estimating equation for the signal subspace which is a linear combination of generalized eigenvector equations for each frequency. It is suggested that the weighting should be related to SNR in each frequency (this suggestion is confirmed in an analysis of asymptotic variance \cite{OptWeights}), but numerical examples suggest that a uniform weighting works equally well. Other popular methods combine signal or noise subspace information uniformly \cite{TOPS, Mixed}. Our contribution, the maximal eigengap estimator, continues this pattern of aggregating narrowband information, and is similar in theme to \cite{CSSM}, with the important difference that we capitalize on the acoustic vector-sensor setting to \emph{optimize} SNR over the weights in a linear combination of cross-spectral power matrices.

This paper is organized as follows. In the next section, we formally introduce the maximal eigengap estimator and our main theoretical contribution, Theorem \ref{prop:convex_quad}, which provides a tractable reformulation of the estimator. Section \ref{sec:exper} applies the maximal eigengap estimator to signals collected in 2019 by an acoustic vector-sensor deployed in the Monterey Bay, where ground-truth information on vessel locations provide a realistic test case. We conclude by summarizing our results.

Before proceeding we establish some notation. We denote vectors and matrices by bold text, and scalars by plain text. For a vector or matrix $\vec{a}$, we denote its transpose by $\vec{a}^{T}$ and its conjugate transpose by $\vec{a}^{H}$. Similarly, we denote the complex conjugate of a scalar and pointwise conjugate of vector/matrix with superscript $^{*}$. Minimal and maximal eigenvalues of a Hermitian matrix $\vec{a}$ (which are real by the spectral theorem \cite{axler}) are denoted $\lambda_{\text{min}}(\vec{a})$ and $\lambda_{\text{max}}(\vec{a})$, with corresponding eigenvectors $\vec{v}_{\text{min}}(\vec{a})$ and $\vec{v}_{\text{max}}(\vec{a})$. The expectation operator is written $\mathbb{E}\left[\cdot\right]$. The Frobenius norm of a matrix $\vec{a}$ is denoted $\fnorm{\vec{a}}$, and $\|\cdot\|$ denotes an arbitrary norm.

\section{Maximal Eigengap Estimator}\label{sec:methodology}

\subsection{Signal Model}

Consider a single acoustic vector-sensor with a single signal source. The sensor's output at time $t$, $\vec{z}(t)$, has four channels, consisting of an omnidirectional hydrophone and three particle velocity measurements. We assume a plane wave signal, where it can be shown (see \cite{nehorai}) that the scaled sensor output is
\begin{equation}\label{time_domain1}
\vec{z}(t) = \vec{w} \, s(t) + \vec{n}(t),
\end{equation}
where $s(t)$ is the acoustic pressure at the sensor at time $t$, $\vec{w} = \left(1, k_{x}, k_{y}, k_{z} \right)^{T}$ is a vector such that $\vec{k} = \left(k_{x}, k_{y}, k_{z}\right)^{T}$ is a unit vector pointing towards the stationary source, and $\vec{n}(t) \in \mathbb{R}^{4}$ is a noise term. The vector $\vec{k}$ can be written $\left( \cos \theta \, \cos \phi, \sin \theta  \cos \phi, \sin \phi \right)^{T}$, where $\theta$ and $\phi$ give the azimuth and elevation angles, respectively, of $\vec{k}$. Throughout, we focus our attention on estimating the azimuth angle $\theta$ using the $x$ and $y$ velocity channels, setting $\phi=0$. This reflects that azimuth is the primary quantity of interest in many DOA estimation problems. 

Denote by $\vec{x}(t)$, $\vec{e}(t)$, and $\vec{u}$, the restriction of $\vec{z}(t)$, $\vec{n}(t)$, and $\vec{w}$, respectively, to their second ($x$) and third ($y$) channels. Because we assume $\phi = 0$, $\vec{u}$ is a unit vector. Equation \eqref{time_domain1} restricted to $x$ and $y$ velocity channels is then
\begin{equation}\label{time_domain2}
\vec{x}(t) = \vec{u} \, s(t) + \vec{e}(t).
\end{equation}

A frequency domain representation of \eqref{time_domain2} is
\begin{equation}\label{freq_domain}
\vec{X}(\omega) = \vec{u} \, S(\omega) + \vec{E}(\omega).
\end{equation}

Denote $\mathbb{E}\left[ S(\omega)^2 \right]$ by $P_{S}(\omega)$, and $\mathbb{E}\left[\vec{E}(\omega) \,  \vec{E}(\omega)^{H}\right]$ by $\mathbf{\Sigma}(\omega)$. Assume that 
\begin{enumerate}[label=(A\arabic*)]
\item \label{A1} $\mathbb{E}\left[S(\omega) \,  \vec{E}(\omega)^{*}\right] = \mathbb{E}\left[S(\omega)^{*} \,  \vec{E}(\omega)\right] = 0$ for each $\omega$.
\item \label{A2} The condition number of $\vec{\Sigma}(\omega)$ is bounded by some constant $C$, uniformly in $\omega$.
$$\frac{\lambda_{\text{max}}\left(\vec{\Sigma}(\omega)\right)}{\lambda_{\text{min}}\left(\vec{\Sigma}(\omega)\right)} \leq C$$
\end{enumerate} 
Assumption \ref{A1} allows us to form the CSD matrix of $\vec{X}(\omega)$, denoted $\vec{P}_{\vec{X}}(\omega)$, as
\begin{align}
\vec{P}_{\vec{X}}(\omega) &= \mathbb{E}\left[\vec{X}(\omega) \, \vec{X}(\omega)^{H}\right] \nonumber\\
 &= P_{S}(\omega) \,\vec{u} \,\vec{u}^{T} + \mathbf{\Sigma}(\omega). \label{covar}
\end{align} 

When $P_{S}(\omega)$ dominates $\vec{\Sigma}(\omega)$, it can be shown that the maximal eigenvector $\vec{v}_{\text{max}}\left(\vec{P}_{\vec{X}}(\omega)\right)$ is close to $\vec{u}$. Moreover, if the spatial covariance $\vec{\Sigma}(\omega)$ is a scalar multiple of the identity matrix this recovery is exact. A precise statement is given by the following proposition, which follows from a direct application of \cite[Theorem 8.5]{Wainwright}.

\begin{proposition}
Assume $P_{S}(\omega) >0$ and $\fnorm{\vec{\Sigma}(\omega)} < P_{S}(\omega)/2$. Let $\vec{u}^{\perp}$ be a unit vector perpendicular to $\vec{u}$, $\vec{U} = \left[\vec{u}, \vec{u}^{\perp}\right]$ a matrix with columns $\vec{u}$ and $\vec{u}^{\perp}$, and $\tilde{p}(\omega)$ the off-diagonal term in $\vec{U}^{T} \, \vec{\Sigma}(\omega) \, \vec{U}$. Then \begin{equation}\label{prop_eq}
\left\|\vec{v}_{\text{max}}\left(\vec{P}_{\vec{X}}(\omega)\right) - \vec{u}\right\|_{2} \leq \frac{2 \left|\tilde{p}(\omega)\right|}{P_{S}(\omega) - 2\fnorm{\vec{\Sigma}(\omega)}}.
\end{equation}
\end{proposition}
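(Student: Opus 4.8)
The plan is to apply the Davis--Kahan-type eigenvector perturbation bound \cite[Theorem 8.5]{Wainwright} after a change of basis that isolates the only part of the noise covariance capable of rotating the leading eigenvector. The guiding observation is that, written in the orthonormal basis $\vec{U} = [\vec{u}, \vec{u}^{\perp}]$, the signal term $P_{S}(\omega)\,\vec{u}\,\vec{u}^{T}$ is diagonal, and a \emph{diagonal} perturbation of a diagonal matrix shifts its eigenvalues without rotating its eigenvectors. I would therefore fold the diagonal part of the noise into the unperturbed matrix and treat only the off-diagonal coupling $\tilde{p}(\omega)$ as the perturbation; this is exactly what produces the numerator $2\left|\tilde{p}(\omega)\right|$ rather than a bound involving the full $\fnorm{\vec{\Sigma}(\omega)}$.

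Concretely, first conjugate \eqref{covar} by the real orthonormal matrix $\vec{U}$, which preserves eigenvalues and, since $\vec{U}^{T}\vec{u} = (1,0)^{T}$, gives
\begin{equation*}
\vec{U}^{T}\,\vec{P}_{\vec{X}}(\omega)\,\vec{U} = \begin{pmatrix} P_{S}(\omega) & 0 \\ 0 & 0 \end{pmatrix} + \vec{U}^{T}\,\vec{\Sigma}(\omega)\,\vec{U}.
\end{equation*}
Writing the Hermitian matrix $\vec{U}^{T}\vec{\Sigma}(\omega)\vec{U}$ as a diagonal part $\vec{\Lambda}$ (with real entries $a$ and $d$) plus an off-diagonal part $\vec{W}$ whose only nonzero entries are $\tilde{p}(\omega)$ and $\tilde{p}(\omega)^{*}$, I would set the unperturbed matrix to $\vec{D} = \mathrm{diag}(P_{S}(\omega),0)+\vec{\Lambda}$ and the perturbation to $\vec{W}$. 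Two elementary facts then feed the theorem: the operator norm of the $2\times 2$ off-diagonal Hermitian matrix $\vec{W}$ equals $\left|\tilde{p}(\omega)\right|$, and the leading eigenvector of the diagonal matrix $\vec{D}$ is exactly $(1,0)^{T}$, i.e.\ $\vec{u}$ in the original coordinates. The eigengap of $\vec{D}$ is $(P_{S}(\omega)+a)-d$, which I would bound below using $|a|,|d|\le\lambda_{\text{max}}(\vec{\Sigma}(\omega))\le\fnorm{\vec{\Sigma}(\omega)}$ to obtain $(P_{S}(\omega)+a)-d \ge P_{S}(\omega)-2\fnorm{\vec{\Sigma}(\omega)}$, strictly positive by hypothesis. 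This positivity both certifies that $(1,0)^{T}$ is genuinely the top eigenvector of $\vec{D}$ and supplies the denominator of \eqref{prop_eq}.

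With these ingredients, \cite[Theorem 8.5]{Wainwright} bounds the distance between the leading eigenvector of $\vec{D}+\vec{W}$ and that of $\vec{D}$ by $2\|\vec{W}\|_{\text{op}}$ divided by the eigengap; substituting $\|\vec{W}\|_{\text{op}} = \left|\tilde{p}(\omega)\right|$ and the eigengap lower bound then yields $2\left|\tilde{p}(\omega)\right|/(P_{S}(\omega)-2\fnorm{\vec{\Sigma}(\omega)})$. Because $\vec{U}$ is orthonormal it is an isometry on $\mathbb{R}^{2}$ (and on $\mathbb{C}^{2}$) carrying eigenvectors of $\vec{U}^{T}\vec{P}_{\vec{X}}(\omega)\vec{U}$ to those of $\vec{P}_{\vec{X}}(\omega)$, so the same bound holds for $\|\vec{v}_{\text{max}}(\vec{P}_{\vec{X}}(\omega)) - \vec{u}\|_{2}$ after transferring back, giving \eqref{prop_eq}. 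I expect the main obstacle to be conceptual rather than computational: recognizing that the diagonal noise should be absorbed into the unperturbed matrix so that the perturbation seen by the eigenvector bound is only $\tilde{p}(\omega)$. A secondary point to handle carefully is the usual sign/phase ambiguity of eigenvectors --- the leading eigenvector must be assigned the orientation that aligns it with $\vec{u}$ for \eqref{prop_eq} to read as stated --- which is precisely the orientation selected in the statement of \cite[Theorem 8.5]{Wainwright}.
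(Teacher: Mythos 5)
Your route has a genuine gap at its load-bearing step: the bound you attribute to \cite[Theorem 8.5]{Wainwright} is not what that theorem says. Its conclusion is not ``$2\|\vec{W}\|_{\mathrm{op}}$ divided by the eigengap'' but rather
\[
\left\|\vec{v}_{\text{max}}\left(\vec{D}+\vec{E}\right) - \vec{v}_{\text{max}}\left(\vec{D}\right)\right\|_{2} \;\leq\; \frac{2\,\|\vec{E}_{12}\|}{\nu - 2\|\vec{E}\|_{\mathrm{op}}},
\]
where $\vec{E}_{12}$ is the \emph{off-diagonal block} of the perturbation in the eigenbasis of the unperturbed matrix, $\nu$ is the unperturbed eigengap, and the hypothesis is $\|\vec{E}\|_{\mathrm{op}} < \nu/2$. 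This undercuts your framing in two ways. First, since the theorem's numerator already involves only the off-diagonal block, the diagonal-absorption trick you present as the key conceptual move is unnecessary: the paper's proof is the direct application with unperturbed matrix $P_{S}(\omega)\,\vec{u}\,\vec{u}^{T}$ (eigengap $\nu = P_{S}(\omega)$, top eigenvector $\vec{u}$) and perturbation $\vec{E} = \vec{\Sigma}(\omega)$, which gives $\vec{E}_{12} = \tilde{p}(\omega)$ and, using $\|\vec{\Sigma}(\omega)\|_{\mathrm{op}} \leq \fnorm{\vec{\Sigma}(\omega)} < P_{S}(\omega)/2$, yields the numerator $2|\tilde{p}(\omega)|$ and denominator $P_{S}(\omega) - 2\fnorm{\vec{\Sigma}(\omega)}$ at once. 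The ``main obstacle'' you identify is exactly the work the cited theorem already does.

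Second, with the theorem correctly stated, your decomposition no longer closes as written: applying it with unperturbed $\vec{D}$ and perturbation $\vec{W}$ gives the denominator $(P_{S}(\omega) + a - d) - 2|\tilde{p}(\omega)|$, not $(P_{S}(\omega)+a-d)$, so your estimate $|a|,|d| \leq \fnorm{\vec{\Sigma}(\omega)}$ does not suffice; the missing inequality is $(d-a) + 2|\tilde{p}(\omega)| \leq 2\fnorm{\vec{\Sigma}(\omega)}$. That inequality is true and the repair is short: Frobenius norm is invariant under conjugation by the real orthogonal $\vec{U}$, so $\fnorm{\vec{\Sigma}(\omega)}^{2} = a^{2} + d^{2} + 2|\tilde{p}(\omega)|^{2}$, and Cauchy--Schwarz applied to the vectors $(-1,\,1,\,\sqrt{2})$ and $(a,\,d,\,\sqrt{2}\,|\tilde{p}(\omega)|)$ gives $(d-a) + 2|\tilde{p}(\omega)| \leq 2\fnorm{\vec{\Sigma}(\omega)} < P_{S}(\omega)$, which simultaneously verifies the hypothesis $\|\vec{W}\|_{\mathrm{op}} = |\tilde{p}(\omega)| < \tfrac{1}{2}(P_{S}(\omega)+a-d)$ and lower-bounds the denominator by $P_{S}(\omega) - 2\fnorm{\vec{\Sigma}(\omega)}$. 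So your final bound is correct and your decomposition is salvageable, but as submitted the argument rests on a misstated lemma and omits the step needed once the lemma is corrected --- whereas the paper's one-line route requires neither.
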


By the bound given in \eqref{prop_eq}, we have tighter control on the difference of $\vec{v}_{\text{max}}\left(\vec{P}_{\vec{X}}(\omega)\right)$ and the signal's DOA $\vec{u}$ when either
\begin{enumerate}[(i)]
\item \label{item1} The maximum eigenvector of $\Sigma(\omega)$ is closely aligned with $\vec{u}$, in the sense that $\left| \vec{v}_{\text{max}}(\vec{\Sigma}(\omega))^{T} \, \vec{u}^{\perp}\right|$ (which directly controls $\left|\tilde{p}(\omega)\right|$) is small.
\item \label{item2} The signal dominates the noise, in the sense that $P_{S}(\omega)$ is large and $\vec{\Sigma}(\omega)$ is small (in Frobenius norm).
Recalling that $\fnorm{\vec{\Sigma}(\omega)} = \sqrt{\lambda_{\text{max}}\left(\vec{\Sigma}(\omega)\right)^2 + \lambda_{\text{min}}\left(\vec{\Sigma}(\omega)\right)^2}$, we can alternatively insist that these eigenvalues are small.
\end{enumerate}

Using assumption \ref{A2}, we can upper bound $\fnorm{\vec{\Sigma}(\omega)}$ as
$$\fnorm{\vec{\Sigma}(\omega)} \leq \lambda_{\text{min}}\left(\vec{\Sigma}(\omega)\right)\sqrt{1 + C^2},$$
and the denominator in \eqref{prop_eq} 
\begin{equation}\label{denominator}
P_{S}(\omega) - 2\fnorm{\vec{\Sigma}(\omega)} \geq P_{S}(\omega) - 2 \lambda_{\text{min}}\left(\vec{\Sigma}(\omega)\right)\sqrt{1 + C^2}.
\end{equation}
From \eqref{prop_eq} and \eqref{denominator}, we see that large $P_{S}(\omega)$ and small $\lambda_{\text{min}}\left(\vec{\Sigma}(\omega)\right)$ make $\vec{v}_{\text{max}}\left(\vec{P}_{\vec{X}}(\omega)\right)$ a better estimate of $\vec{u}$. However, note that both $P_{S}(\omega)$ and $\lambda_{\min}(\Sigma(\omega))$ are properties of unobserved variables. From \eqref{covar}, we note that $\lambda_{\text{max}}(\vec{P}_{\vec{X}}(\omega))$ approximates $P_{S}(\omega)$ and $\lambda_{\text{min}}(\vec{P}_{\vec{X}}(\omega))$ approximates $\lambda_{\text{min}}(\vec{\Sigma}(\omega))$. Hence our desire for large $P_{S}(\omega)$ and small $\lambda_{\text{min}}\left(\vec{\Sigma}(\omega)\right)$ can be stated as the following. 

\textbf{Goal}: for better estimation of the signal DOA, we should have $\lambda_{\text{max}}(\vec{P}_{\vec{X}}(\omega))$ large and $\lambda_{\text{min}}(\vec{P}_{\vec{X}}(\omega))$ small.

\subsection{Combining Narrowband Information} 

For wideband signals, we seek to aggregate the narrowband information described in the previous section over a range of $\omega$ values, only some of which contain the signal of interest. For frequency bins $F = \{\omega_{1}, ..., \omega_{|F|}\}$, our goal is to decide, for each $\omega \in F$, whether the signal of interest is contained in the narrowband CSD matrix $\vec{P}_{\vec{X}}(\omega)$, and if so to what degree it agrees with the signal DOA in other frequency bins. To accomplish this goal, we extend \eqref{covar} by searching for a weighted combination of the power matrices that gives a tighter bound in \eqref{prop_eq}. Namely, we propose solving, for some to-be-specified norm $\|\cdot\|$, the following problem, where $\vec{a} \in \mathbb{R}^{|F|}$ is a weight vector such that $a_{\omega}$ denotes the entry of $\vec{a}$ corresponding to frequency bin $\omega \in F$.
\begin{align}
\max_{\vec{a} \in \mathbb{R}^{|F|}} \; & \lambda_{\text{max}}\left(\sum_{\omega \in F} a_{\omega} \vec{P}_{\vec{X}}(\omega) \right) -\lambda_{\text{min}} \left(\sum_{\omega \in F} a_{\omega} \vec{P}_{\vec{X}}(\omega) \right) \label{og_prob}\\
& \quad \quad \text{s.t.}  \quad  \|\vec{a}\| \leq 1, \quad \vec{a} \geq 0 \nonumber
\end{align}
In \eqref{og_prob}, the $\|\cdot\| \leq 1$ constraint defines the maximum size of the weight vector. Because the $\lambda_{\text{max}}$ term measures the power in the signal subspace, and $\lambda_{\text{min}}$ term the power in the noise subspace, this objective maximizes the estimated SNR\footnote{In detail, maximizing difference in \eqref{og_prob} is equivalent to maximizing $\log(\lambda_{\text{max}}/\lambda_{\text{min}})$ which is equivalent to maximizing the SNR $\lambda_{\text{max}}/\lambda_{\text{min}}$.}.

One concern with the formulation \eqref{og_prob} is that some frequencies (in practice, often the lower ones) have more power in both signal and noise subspaces, so these terms have the potential to dominate the objective function. To remedy this, we propose three standardization schemes for the CSD matrices $\vec{P}_{\vec{X}}(\omega)$. We denote a standardized $\vec{P}_{\vec{X}}(\omega)$ by $\vec{Q}_{\vec{X}}(\omega)$. Three intuitive options for standardization are:
\begin{enumerate}[(a)]
\item Take each matrix to have unit trace, so that \label{scaling1}
$$\vec{Q}_{\vec{X}}(\omega) = \frac{\vec{P}_{\vec{X}}(\omega)}{\tr\left(\vec{P}_{\vec{X}}(\omega)\right)},$$
and the total power (aggregated across both signal and noise subspaces) in each frequency bin is 1.
\item Take each matrix to have minimal eigenvalue 1, so that 
$$\vec{Q}_{\vec{X}}(\omega) = \frac{\vec{P}_{\vec{X}}(\omega)}{\lambda_{\text{min}}\left(\vec{P}_{X}(\omega)\right)},$$
and the maximal eigenvalue of the matrix gives its SNR.
\item No standardization, so that
$$\vec{Q}_{\vec{X}}(\omega) = \vec{P}_{\vec{X}}(\omega).$$ \label{scaling3}
\end{enumerate} 
We compare the merits of each standardization method in section \ref{sec:exper}.

Lastly, we replace $\vec{P}_{\vec{X}}(\omega)$ in \eqref{og_prob} with an estimate $\hat{\vec{P}}_{\vec{X}}(\omega)$ obtained from observations. We extend the notation for standardized matrices as one would expect, so that $\hat{\vec{Q}}_{\vec{X}}(\omega)$ is the standardized $\hat{\vec{P}}_{\vec{X}}(\omega)$. Then \eqref{og_prob} becomes
\begin{equation}\label{eq5}
\max_{\vec{a} \in \mathbb{R}^{|F|}} \lambda_{\text{max}}\left(\sum_{\omega \in F} a_{\omega} \hat{\vec{Q}}_{\vec{X}}(\omega) \right) -   \lambda_{\text{min}}\left(\sum_{\omega \in F} a_{\omega} \hat{\vec{Q}}_{\vec{X}}(\omega)  \right)
\end{equation}
$$\text{s.t.} \quad \|\vec{a}\| \leq 1, \quad \vec{a} \geq 0.$$

We define the \emph{maximal eigengap estimator} as the real part of a maximum eigenvector, $\text{Re}\left\{\vec{v}_{\text{max}}\left(\sum_{\omega \in F} \hat{a}_{\omega} \hat{\vec{Q}}_{\vec{X}}(\omega) \right)\right\}$, where $\hat{\vec{a}}$ denotes the maximizer in \eqref{eq5}. The following Theorem is instrumental in computing the maximal eigengap estimator.

\begin{theorem}\label{prop:convex_quad}
The objective in \eqref{eq5} is the square root of a convex quadratic function in $\vec{a}$.
\end{theorem}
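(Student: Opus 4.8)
The plan is to exploit the fact that every matrix in \eqref{eq5} is $2 \times 2$ and Hermitian, so the eigengap has an explicit closed form. Write $\vec{M}(\vec{a}) = \sum_{\omega \in F} a_{\omega} \hat{\vec{Q}}_{\vec{X}}(\omega)$. Being a real linear combination of $2 \times 2$ Hermitian matrices, $\vec{M}(\vec{a})$ is itself $2 \times 2$ and Hermitian. For any such matrix the two eigenvalues satisfy $\lambda_{\text{max}} + \lambda_{\text{min}} = \tr \vec{M}$ and $\lambda_{\text{max}} \lambda_{\text{min}} = \det \vec{M}$, so that $(\lambda_{\text{max}} - \lambda_{\text{min}})^2 = (\tr \vec{M})^2 - 4 \det \vec{M}$.

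I would then expand this in terms of the entries of $\vec{M}(\vec{a})$. Denoting the (real) diagonal entries by $m_{11}, m_{22}$ and the (complex) off-diagonal entry by $m_{12}$, a direct computation gives
$$(\tr \vec{M})^2 - 4 \det \vec{M} = (m_{11} - m_{22})^2 + 4 \left| m_{12} \right|^2.$$
Since the eigengap $\lambda_{\text{max}} - \lambda_{\text{min}}$ is nonnegative, the objective in \eqref{eq5} equals $\sqrt{(m_{11} - m_{22})^2 + 4 |m_{12}|^2}$.

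It remains to recognize the radicand as a convex quadratic in $\vec{a}$. Each entry of $\vec{M}(\vec{a})$ is real-linear in $\vec{a}$: the diagonal entries $m_{11}(\vec{a})$ and $m_{22}(\vec{a})$ are real linear combinations of the real diagonal entries of the matrices $\hat{\vec{Q}}_{\vec{X}}(\omega)$, and $\text{Re}\, m_{12}(\vec{a})$ and $\text{Im}\, m_{12}(\vec{a})$ are likewise real-linear in $\vec{a}$. Thus
$$(m_{11} - m_{22})^2 + 4 |m_{12}|^2 = (m_{11} - m_{22})^2 + 4 (\text{Re}\, m_{12})^2 + 4 (\text{Im}\, m_{12})^2$$
is a sum of squares of real-linear forms in $\vec{a}$, hence equals $\vec{a}^{T} \vec{H} \vec{a}$ for some positive semidefinite $\vec{H}$, a convex quadratic. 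This proves the theorem.

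There is no genuine obstacle here: the whole argument rests on the closed-form eigengap for $2 \times 2$ Hermitian matrices, available precisely because the single-source azimuth restriction in Section \ref{sec:methodology} reduces the problem to two channels. The one point requiring care is the complex off-diagonal term, which must be split into its real and imaginary parts so that $|m_{12}|^2$ appears as a sum of two real squares; convexity of the resulting quadratic is then automatic, since a sum of squares of affine functions is always convex.
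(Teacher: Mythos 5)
Your proof is correct, and at its core it follows the same path as the paper's: both exploit that the matrices are $2\times 2$ Hermitian so the eigengap has a closed form, and your quantity $(\tr \vec{M})^{2} - 4\det \vec{M}$ is exactly the discriminant of the characteristic polynomial that the paper computes in \eqref{discriminant}. Where you genuinely diverge is in how convexity of the radicand is established. The paper expands the squared discriminant into a double sum over frequency pairs, collects it into a complex matrix $\tilde{\vec{R}}$ as in \eqref{R_tilde}, symmetrizes it via \eqref{R_def} to obtain a real symmetric $\vec{R}$, and then argues convexity indirectly: the quadratic form is pointwise nonnegative because Hermitian matrices have real eigenvalues, and a nonnegative quadratic form with symmetric matrix is positive semidefinite (this last implication is left implicit in the paper). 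You instead complete the square, rewriting the radicand as $(m_{11}-m_{22})^{2} + 4(\mathrm{Re}\, m_{12})^{2} + 4(\mathrm{Im}\, m_{12})^{2}$, a sum of squares of real-linear forms in $\vec{a}$. This is cleaner in two respects: it dispenses entirely with the symmetrization trick and the real-eigenvalue argument, since the Gram matrix $\vec{H} = \vec{c}_{1}\vec{c}_{1}^{T} + 4\vec{c}_{2}\vec{c}_{2}^{T} + 4\vec{c}_{3}\vec{c}_{3}^{T}$, with $\vec{c}_{1} = (q_{\omega} - s_{\omega})_{\omega \in F}$, $\vec{c}_{2} = (\mathrm{Re}\, r_{\omega})_{\omega \in F}$, $\vec{c}_{3} = (\mathrm{Im}\, r_{\omega})_{\omega \in F}$, is manifestly real, symmetric, and positive semidefinite in one stroke; and it is more informative, since it shows the matrix $\vec{R}$ appearing in \eqref{main_prob} can be taken to have rank at most three, a structural fact the paper's construction obscures and one that could be useful downstream, for instance when computing $\vec{v}_{\text{max}}(\vec{R})$ or analyzing the extreme points of \eqref{main_prob}. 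The paper's route buys nothing extra over yours beyond arriving at an explicit $\vec{R}$, which your decomposition also supplies, so your argument is a legitimate and arguably preferable alternative proof of the same statement.
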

\begin{proof}
Denote the entries of $\hat{\vec{Q}}_{\vec{X}}(\omega)$ by
$$\hat{\vec{Q}}_{\vec{X}}(\omega) = \left(\begin{array}{cc} q_{\omega} & r_{\omega}\\ r^{*}_{\omega} & s_{\omega} \end{array}\right),$$
where we have used the conjugate symmetry of $\hat{\vec{Q}}_{\vec{X}}(\omega)$ in the off-diagonal terms. The eigenvalues 
$$\lambda_{\text{max}}\left(\sum_{\omega \in F} a_{\omega} \hat{\vec{Q}}_{\vec{X}}(\omega) \right) \quad \text{ and } \quad  \lambda_{\text{min}}\left(\sum_{\omega \in F} a_{\omega} \hat{\vec{Q}}_{\vec{X}}(\omega)  \right)$$
are given by the zeros of the characteristic polynomial for $\sum_{\omega \in F} a_{\omega} \hat{\vec{Q}}_{\vec{X}}(\omega)$,
\begin{equation}\label{det_is_zero}
\det\left(\begin{array}{cc} \sum_{\omega \in F} a_{\omega} q_{\omega} - \lambda & \sum_{\omega \in F} a_{\omega} r_{\omega} \\ \sum_{\omega \in F} a_{\omega} r^{*}_{\omega} & \sum_{\omega \in F} a_{\omega} s_{\omega} - \lambda \end{array} \right) = 0.
\end{equation}
These roots can be computed explicitly using the quadratic formula, and their difference is the discriminant of the quadratic in \eqref{det_is_zero}, which is nonnegative because the eigenvalues of Hermitian matrices are real.
\begin{equation}\label{discriminant}
\scriptsize
\sqrt{\left(\sum_{\omega \in F} a_{\omega}\left( q_{\omega} + s_{\omega}\right)\right)^{2} - 4 \left(\left(\sum_{\omega \in F} a_{\omega} q_{\omega} \right) \left( \sum_{\omega \in F} a_{\omega} s_{\omega}\right) - \left| \sum_{\omega \in F} a_{\omega} r_{\omega} \right|^2\right)}
\end{equation}
Because we maximize this expression in \eqref{eq5}, and it is always nonnegative, we can instead maximize the square of \eqref{discriminant}
\begin{equation}\label{square_discriminant}
\scriptsize
\left(\sum_{\omega \in F} a_{\omega} \left(q_{\omega} + s_{\omega} \right)\right)^{2} - 4 \left(\left(\sum_{\omega \in F} a_{\omega} q_{\omega} \right) \left( \sum_{\omega \in F} a_{\omega} s_{\omega}\right) - \left| \sum_{\omega \in F} a_{\omega} r_{\omega} \right|^2\right).
\end{equation}
Expanding \eqref{square_discriminant}, we have
\begin{align}\label{quad_form}
\sum_{\omega_{i} \in F} \sum_{\omega_{j} \in F} \bigg( & a_{\omega_{i}} \left(q_{\omega_{i}} + s_{\omega_{i}}\right) \left(q_{\omega_{j}} + s_{\omega_{j}}\right) a_{\omega_{j}} \\
& - 4 \left( a_{\omega_{i}} q_{\omega_{i}} s_{\omega_{j}} a_{\omega_{j}} - a_{\omega_{i}} r_{\omega_{i}} r^{*}_{\omega_{j}} a_{\omega_{j}} \right) \bigg) \nonumber
\end{align}

Define a matrix $\tilde{\vec{R}} \in \mathbb{C}^{|F| \times |F|}$ with
\begin{equation}\label{R_tilde}
\tilde{\vec{R}}_{\omega_{i}, \omega_{j}} = \left(q_{\omega_{i}} + s_{\omega_{i}}\right) \left(q_{\omega_{j}} + s_{\omega_{j}}\right) - 4\left( q_{\omega_{i}} s_{\omega_{j}} - r_{\omega_{i}} r^{*}_{\omega_{j}}\right),
\end{equation}
so that the quadratic form $\vec{a}^{T} \, \tilde{\vec{R}} \, \vec{a}$ gives \eqref{quad_form}. 

Next we apply a common technique to produce a real symmetric quadratic form $\vec{R}$ which is equal to the quadratic form given by $\tilde{\vec{R}}$. First, note that expression \eqref{square_discriminant} is \emph{real} and \emph{scalar}, because $a_{\omega}$, and $q_{\omega}$, and $s_{\omega}$ are real (recall that $\hat{\vec{Q}}_{\vec{X}}(\omega)$ is Hermitian). Hence \eqref{square_discriminant} is equal to its conjugate, its transpose, and its conjugate transpose. Because \eqref{quad_form} and \eqref{square_discriminant} are equal, it follows that
\begin{equation}\label{long_eq}
\left(\vec{a}^{T} \tilde{\vec{R}} \vec{a}\right)^{T} = \left(\vec{a}^{T} \tilde{\vec{R}} \vec{a}\right)^{H} = \vec{a}^{T} \tilde{\vec{R}}^{*}\vec{a} = \vec{a}^{T} \tilde{\vec{R}} \vec{a}.
\end{equation}
Define 
\begin{equation}\label{R_def}
\vec{R} = \frac{\frac{1}{2}\left(\tilde{\vec{R}} + \tilde{\vec{R}}^{H}\right)+ \frac{1}{2}\left(\tilde{\vec{R}} + \tilde{\vec{R}}^{H}\right)^{T}}{2},
\end{equation}
which is real, symmetric, and from \eqref{long_eq} yields the same quadratic form as $\tilde{\vec{R}}$.

Finally, we remark that the quadratic form induced by $\vec{R}$ is convex because $\vec{a}^{T} \vec{R} \vec{a}$ is nonnegative for any choice of $a$. This follows from the derivation of \eqref{discriminant}, in which we note that the discriminant is nonnegative because the eigenvalues of the conjugate symmetric matrix $\sum_{a \in F} a_{\omega} \hat{\vec{Q}}_{\vec{X}}(\omega)$ must be real. 
\end{proof}

Theorem \ref{prop:convex_quad} permits an equivalent definition of the weights $\hat{\vec{a}}$ in the maximal eigengap estimator of \eqref{eq5} as the maximizer of the following expression, where $\vec{R}$ is defined in \eqref{R_def}.
\begin{align}\label{main_prob}
& \max_{\vec{a} \in \mathbb{R}^{|F|}}  \vec{a}^{T} \vec{R} \vec{a}\\
\text{s.t.} & \quad \|\vec{a}\| \leq 1 \quad \vec{a} \geq 0. \nonumber
\end{align}
The primary benefit of this reformulation is the simple form of problem \eqref{main_prob}, which facilitates computation. We consider using both the ${2}$-norm $\|\cdot\|_{2}$ and the ${1}$-norm $\|\cdot\|_{1}$ to define the weight constraint. Note that, regardless of the norm used, \eqref{main_prob} maximizes a convex function over a convex set, so a maximizer exists and is contained within the extreme points of the feasible set \cite[Corollary 32.3.4]{rockafellar}.

\section{Applications \& Experiments}\label{sec:exper}

In this section we apply the maximal eigengap estimator to a set of signals collected in 2019 by an acoustic vector-sensor deployed in the Monterey Bay. We combine these signals with Automatic Identification System (AIS) data providing the GPS locations of vessels in the Bay throughout 2019. By comparing actual vessel DOAs with estimates provided by the maximal eigengap estimator, we assess the performance of the maximal eigengap estimator in a realistic setting.

\subsection{Data Details}\label{sec:data}

The acoustic signals we consider were collected by a Geospectrum M20-105 vector-sensor deployed at a depth of 891 meters on the Monterey Accelerated Research System cabled observatory, which is operated by the Monterey Bay Aquarium Research Institute. The time window of the data spans from February 1st to December 31st, 2019, and except for a handful of minor, maintenance-related disruptions these data are a continuous representation of the acoustic soundscape in Monterey Bay over the time frame considered.

\begin{figure}[b]
\begin{center}
\includegraphics[width=.5\textwidth]{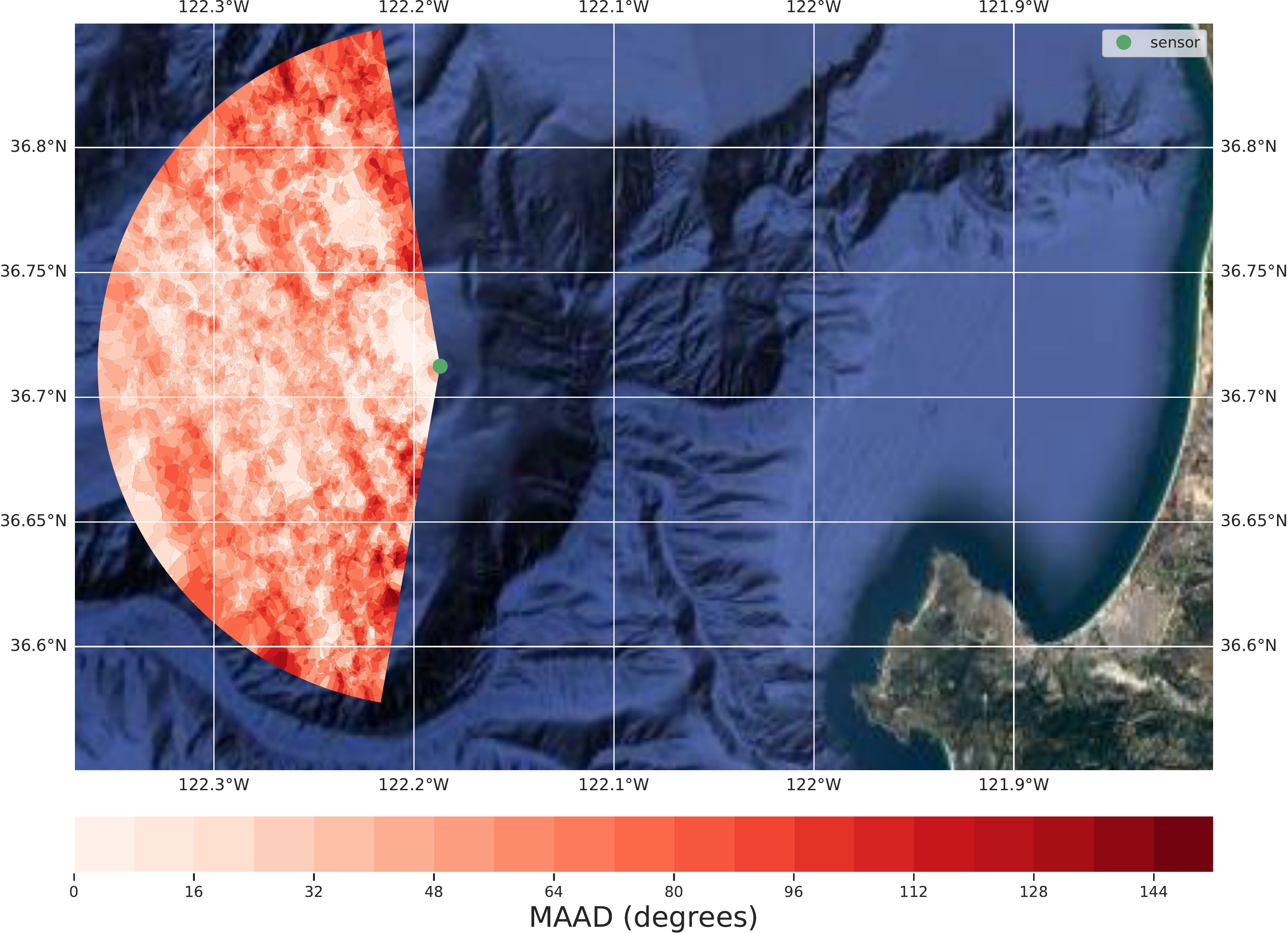}
\end{center}
\caption{The acoustic vector-sensor within the Monterey Bay overlaid with an interpolation of the mean absolute angular deviation for the maximal eigengap estimator (with ${2}$-norm and no scaling, see Table \ref{tab:methods}) constructed using K-nearest neighbors regression over all observed vessels.}
\label{fig:map}
\end{figure}

The locations of vessels are given by GPS data provided by the US Coast Guard, with a time resolution of five minutes. We consider all observations where a vessel is due West, within 15 km, and has bearing between $190^{\circ}$ and $350^{\circ}$ of the sensor. We also require that there are no additional vessels identified within 20 km of the sensor. We pair each of these vessel locations with the corresponding five minutes of signal recorded by the acoustic vector-sensor, resulting in 3674 observations labeled by signal DOA. By restricting our attention to vessels West of the sensor, we focus on detecting an eigenvector defining the signal subspace, and not considering vessels within $10^{\circ}$ of North or South avoids complications arising from the ambiguity of its sign. We note that once a signal subspace has been detected, additional processing can be used to resolve this ambiguity using the acoustic vector-sensor's omnidirectional channel \cite{nehorai}.

Figure \ref{fig:map} gives the location of the vector-sensor within the Monterey Bay, with an interpolated error function for the estimator superimposed. Bathymetrically, the sensor is located on a shelf within the Monterey Canyon. To the sensor's East, the ocean is shallower and contains mostly fishing and recreational vessels, whereas West of the sensor a pair of nearby shipping channels yields traffic that is primarily commercial. The data contain an unknown proportion of errors caused by noisy GPS reports and interfering signals such as small recreational vessels or aquatic mammals. The multi-season nature of the data also produces dynamic propagation conditions which impact the strength of both source and interfering signals at the sensor \cite{kay}.

\subsection{Numerical Performance}

To assess the maximal eigengap estimator's performance, we apply it to the acoustic signals with known DOAs described in \ref{sec:data}. We take frequency bins $F$ ranging from 75 Hz to 300 Hz with 2 Hz resolution, and use averaged periodograms to estimate the CSD matrices $\hat{\vec{P}}_{\vec{X}}(\omega)$ \cite{spectral_est}. In \ref{sec:methodology}, we introduced two opportunities for variation in the implementation of the maximal eigengap estimator \eqref{main_prob}, depending on the norm used to constrain the weight vector and the scaling of the CSD matrix included in the data preprocessing \ref{scaling1}-\ref{scaling3}. We consider three of these variations in the following experiments: ${1}$-norm with trace scaling, ${2}$-norm with minimal eigenvalue scaling, and ${2}$-norm with no scaling. As a comparison of how the maximal eigengap estimator performs relative to existing methods, we also implement and apply the covariance-based DOA estimator from \cite{nehorai}. Table \ref{tab:methods} gives the different estimators considered.

\begin{table}
\begin{center}
\begin{tabular}{|c|c|c|c|} \hline
Reference & Abbreviation & Norm & Scaling\\\hline
This paper & ${1}$-Trace & ${1}$-norm & Trace\\
This paper & ${2}$-MinEig & ${2}$-norm & Minimum Eigenvalue\\
This paper & ${2}$-None & ${2}$-norm & None\\
\cite{nehorai} & Covar & NA & NA \\ \hline
\end{tabular}
\end{center}
\caption{The estimators considered. The maximal eigengap \eqref{main_prob} uses various norms and CSD matrix scalings. The covariance estimator of \cite{nehorai} is included as a competing method.}
\label{tab:methods}
\vspace{-1cm}
\end{table}

We comment briefly on the implementation details of the maximal eigengap estimator using the ${1}$ and ${2}$-norms. For the ${1}$-norm constraint, the weight vector in the maximal eigengap estimator can be computed in closed form. Indeed, when the 1-norm is used in \eqref{main_prob}, the extreme points of the constraint set are $\left\{\mathbf{0}, \mathbf{e}_{1} ,...,\mathbf{e}_{|F|}\right\}$, where $\mathbf{0}$ denotes the zero vector and $\mathbf{e}_{i}$ the $i$th standard basis vector. Then it is clear that the maximizer in \eqref{main_prob} occurs at the standard basis vector corresponding to the largest diagonal term in $\mathbf{R}$. 

\begin{figure}[b]
\vspace{-.5cm}
\begin{center}
\includegraphics[width=.5\textwidth]{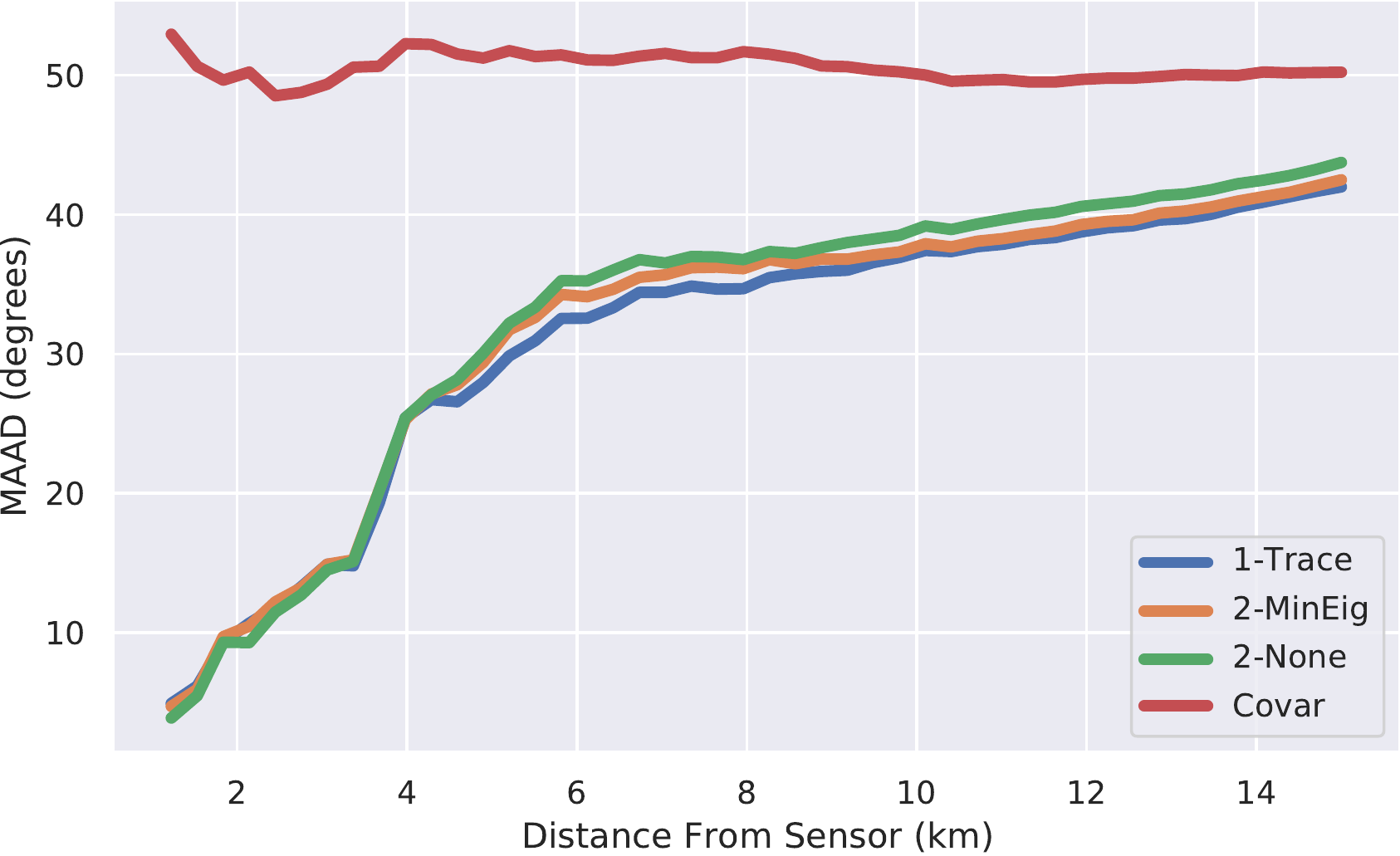}
\end{center}
\caption{Mean absolute angular deviation for each DOA estimator in Table \ref{tab:methods}, as a function of the distance between a vessel and the sensor. The three variants of the maximal eigengap estimator perform much better than the covariance method of \cite{nehorai} when the vessels are closer to the sensor, and this advantage decreases as the distance increases.}
\label{fig:line}
\end{figure}

In the ${2}$-norm constrained case, \eqref{main_prob} resembles a maximal eigenvalue problem, but with an additional nonnegativity constraint. When the entries of $\mathbf{R}$ are nonnegative, the Perron-Frobenius theorem guarantees that $\mathbf{v}_{\text{max}}(\mathbf{R})$ satisfies the nonnegativity constraint, but from its construction $\mathbf{R}$ may have negative entries. Instead, we approximate a solution to \eqref{main_prob} by projecting $\mathbf{v}_{\text{max}}(\mathbf{R})$ onto the nonnegative orthant. In practice, we find that the $\mathbf{v}_{\text{max}}(\mathbf{R})$ is primarily composed of positive entries, and that occasional negative entries are close to zero, which suggests that this approximation is reasonable.



The results of these experiments demonstrate that the maximal eigengap estimator is a more accurate method for DOA estimation of an wideband signal using an acoustic vector-sensor than those existing in the literature. Figure \ref{fig:line} gives the mean absolute angular deviation (MAAD) of the various methods as a function of the upper bound on the distance between the sensor and vessels. Though variations of the maximal eigengap estimator perform similarly, the ${2}$-MinEig variation performs strictly worse than the ${1}$-Trace and ${2}$-None variations over all distances. Most importantly, all variations of the maximal eigengap estimator outperforms its competitor, the covariance method introduced in \cite{nehorai}. This performance difference is especially large for vessels close to the sensor, where the mean absolute angular deviation is approximately 30 degrees less for the maximal eigengap estimator. We conjecture that this difference is due to the flexibility of the maximal eigengap estimator, which selects from among a set of frequency bins those that present a similar signal subspace. The optimal frequency bins change depending on the vessel's distance from the sensor, and the maximal eigengap estimator has the ability to adapt to this change using its optimal weighting of CSD matrices.

\begin{figure}[ht]
\begin{center}
\includegraphics[width=.5\textwidth]{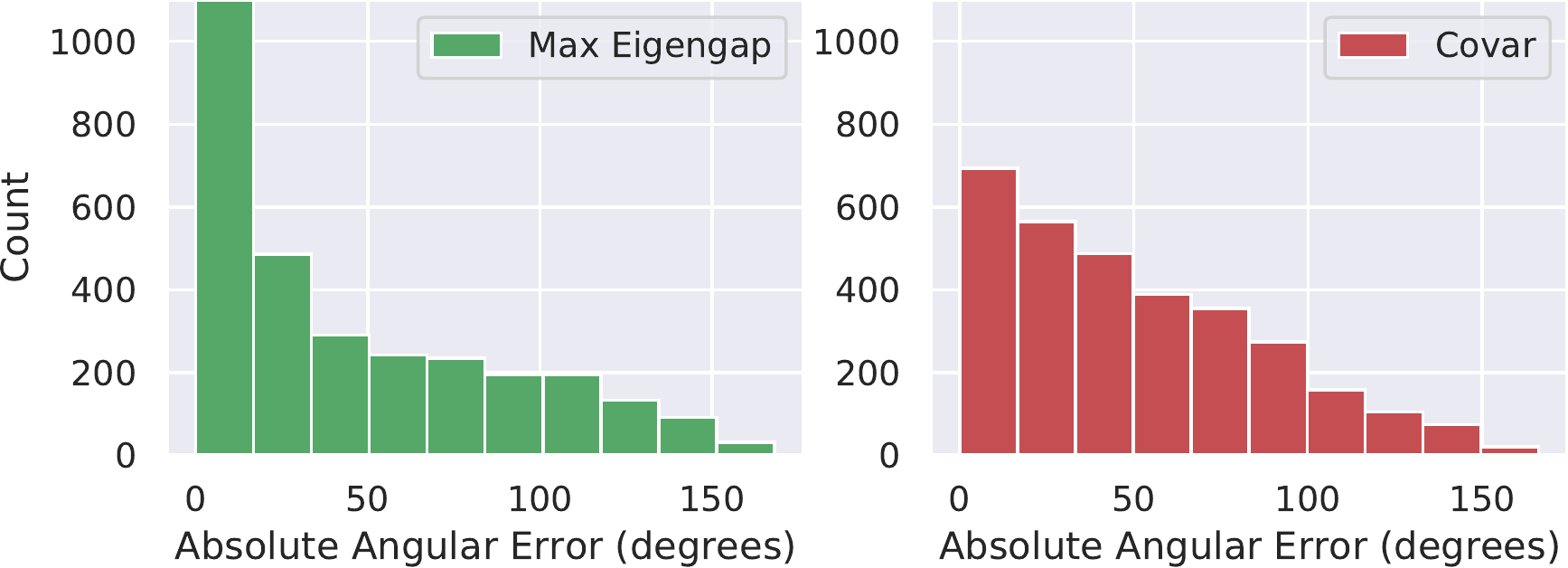}
\end{center}
\vspace{-.3cm}
\caption{Histograms of absolute angular error for the maximal eigengap estimator (with $2$-norm and no scaling) and the covariance method of \cite{nehorai}, applied to all vessels within 15 km of the sensor. The maximal eigengap estimator has error near zero more often than its competitor.}
\label{fig:hist}
\vspace{-.3cm}
\end{figure}

Further comparisons of the estimators' performance demonstrate the utility of the maximal eigengap estimator. Figure \ref{fig:hist} presents histograms of the absolute angular error for the ${2}$-None variant of the maximal eigengap estimator and the competing covariance method. In this figure, the maximal eigengap has $50\%$ more observations in the smallest error bin than the covariance method. Figure \ref{fig:map} presents the absolute angular deviation of the maximal eigengap estimator as a function of vessel location, constructed by interpolating error over all vessels in the data set. Certain vessel locations present more difficulty for DOA estimation than others. We suspect that this difficulty can primarily be attributed to propagation conditions arising from bathymetric features of those locations.

\section{Conclusion}

In this paper we introduce the maximal eigengap estimator for DOA estimation of a wideband signal collected with a single acoustic vector-sensor. The maximal eigengap estimator's utility is in its formulation, which optimally and tractably combines signal subspace information across a frequency range. We demonstrate that the maximal eigengap estimator outperforms existing techniques for DOA estimation of maritime vessels on a set of labeled data collected by an acoustic vector-sensor.

\bibliographystyle{template/IEEEtran}
\bibliography{template/IEEEabrv,thebib.bib}

\end{document}